
%

\documentclass[11pt,reqno]{amsart}
\usepackage{amsaddr}
\usepackage[table]{xcolor}

\usepackage[margin=2.7cm]{geometry}
\usepackage{array}
\usepackage{booktabs}
\usepackage[dvipdfmx]{graphicx}
\usepackage[sectionbib]{natbib} 
\usepackage{amsfonts}
\usepackage{amssymb,amsthm}
\usepackage{amsmath}
\usepackage[ruled,vlined,lined,linesnumbered]{algorithm2e}
\linespread{1.2}
\usepackage{float}
\usepackage[font=small,labelfont=bf]{caption}

\newtheorem{theorem}{Theorem}[section]

\theoremstyle{definition}

\theoremstyle{remark}

\numberwithin{equation}{section}

\begin{document}

\title{An approximate closed formula for European Mortgage Options }

\author{Manuel Lopez Galvan}

\address{Faculty of Exact and Natural Sciences - University of Buenos Aires -  Argentina  }

\email{manuel.lgalvan@uba.ar, amlopezgalvan@gmail.com , mlopezgalvan@hotmail.com }
\thanks{}


\date{}

\dedicatory{}

\begin{abstract}
The aim of this paper is to investigate the use of close formula approximation for pricing European mortgage options. Under the assumption of logistic duration and normal mortgage rates the underlying price at the option expiry is approximated by shifted lognormal or regular lognormal distribution by matching moments. Once the price function is approximated by lognormal distributions, the option price can be computed directly as an integration of the distribution function over the payoff at the option expiry by using Black-Scholes-Merton close formula. We will see that lower curvature levels correspond to positively skewness price distributions and in this case lognormal approximation leads to close parametric formula representation in terms of all model parameters. The proposed methodologies are tested against Monte Carlo approach under different market and contract parameters and the tests confirmed that the close form approximation have a very good accuracy.     
\\
\\
{\bf Keywords}: Mortgages option, mortgages rates, logistic duration, shifted lognormal, moment matching, Monte Carlo pricing, Black-Scholes pricing.

\end{abstract}

\maketitle
\section{Introduction and Background}
Typically mortgage options are options on TBA pass-through insured against default by Fannie Mae, Freddie Mac, or Ginnie Mae. A TBA is a liquid short-dated forward agreement with a specified coupon, agency, and maturity. A mortgage call gives the holder the right to buy a pass-through at an agreed upon price on the option expiry date and similarly, a mortgage put gives the holder the right to sell the pass-through at a pre-specified price on the expiry date. Since liquidity in the underlying TBA contracts is strictly confined to a few months forward settlement dates, the mortgage options are necessarily short-dated. Mortgage options are commonly available for a broad range of customer specified strikes, maturities, and underlying coupons, and are traded over the counter by most major Wall Street broker dealers. Traditionally, duration is defined as the negative percentage price sensitivity with respect to underlying rate. Usually as mortgage rates fall, prepayments increase and a pass-through’s expected life shortens; as rates rise, prepayments subside and expected cash flows extend. Consequently, the duration of a pass-through shortens as rates fall and lengthens as rates rise suggesting a logistic functional form. In general, the distribution of mortgage prices are negatively skewed and therefore far from lognormal, thus Black-Scholes cannot be expected to produce reasonable mortgage option prices or risk sensitivities. Usually, the common way to make a credible mortgage option pricing model is by Monte Carlo approach. In \cite{Prendergast(2003)} the author explores the pricing and the risk sensitivities of mortgage options by using Monte Carlo simulation with a logistic duration and under normal interest mortgage rates process. In this work we will see that under logistical durations and normal rates assumptions the empirical price distribution could be approximated by negative or positive shifted lognormal distribution and then a modified Black-Scholes formula may be used to price mortgage option. Also, we will show that for small values of curvature the empirical price distribution could be approximated by regular lognormal distribution with parameters depending of logistic duration. In this way, we will see that a power of the price distribution can be expressed as a sum of two lognormal variables then by matching mean and variance we will approximate it by a lognormal distribution. The moment matching methodology has been very useful in finance over the years and one of its first applications was performed in \cite{Levy(1992)}, where the author  approximated the distribution of a basket by matching its first two moments with the moments of a lognormal density function. Once the terminal price distribution is approximated by a lognormal variable the price of an European option on the mortgage can be computed directly by using the Black-Scholes-Merton formula. This pricing approximation will also allow us to approximate the option risk sensitivities by close formulas. We will see that the accuracy of this close approximation formula will depend of the curvature of the duration profile and also depend of the relative position between contract and market parameters.

The duration $D$ is defined as the negative percentage price sensitivity with respect to underlying rate, that is

\begin{eqnarray}\label{priceDiffEquation}
dP(r)=-D(r)P(r)dr
\end{eqnarray}                              

where $P(r)$ is the price of the bond as a function of underlying rate $r$. The duration of a mortgage varies across rates
monotonically, rising from a minimal level for low rates to approach a stable level for high rates. We parameterize the duration by means of a logistic function of the relative rate r, 

\begin{eqnarray}
D(r)= L+ \frac{U}{1+e^{-C(r-x_0)}}
\end{eqnarray}                              

where $C$ is the curvature, $L$ is the lower bound, $U$ is the upper bound and $x_0$ is the coupon. 
Integrating Equation \ref{priceDiffEquation} results in an expression for price as a function of the relative rate,

\begin{eqnarray}\label{priceEquation}
P(r)=k e^{-Lr}\big(1+e^{C(r-x_0)}\big)^{-U/C}
\end{eqnarray}
    
where $k$ is a constant level that may be deduced by matching the market price quote with the model price \ref{priceEquation} at the current mortgage rate. Thus, if $P_0$ and $r_0$ denotes the current spot market price and the current mortgage rate respectively, then the constant level is, $$k=P_0e^{Lr_0}\big(1+e^{C(r_0-x_0)}\big)^{U/C}$$  
We assume that the mortgage rate follows a normal process with drift,

\begin{eqnarray}
dr_t&=&\mu dt + \sigma dW_t \\
r(0)&=&r_0
\end{eqnarray}

where $\mu$ is the drift, $\sigma$ the volatility and $W_t$ is a Brownian motion. The rate distribution at a future time $t=T$ is,

\begin{eqnarray}
r_T\sim \mathcal{N}(r_0 +\mu T, \sigma^2 T)
\end{eqnarray} 
  
Now consider a Call mortgage option struck at a strike level $K$, expiring at $T$ and the risk-free rate of interest $r_f$. Under the risk-neutral world the call price can be computed as the expected value of the payoff at the expiry date discounted at the risk-free rate of interest,that is,

\begin{eqnarray}
\mathcal{C} = e^{-r_f T} E((P(r_T)-K)^+)
\end{eqnarray}
   
where $P(r_T)$ is the mortgage price distribution at the expiring.     


\section{Close pricing approximation}
In this section we will develop the algorithm for pricing mortgage option by using shifted Black-Scholes formulas under shifted lognormal and lognormal price distribution approximation. 
\subsection{Shifted Lognormal approximation}

We start recalling some properties of lognormal and shifted lognormal distribution.
Suppose that $X\sim \mathcal{N}(\mu_X, \sigma^2_X)$ is normal, then the random variable $Z = e^{X}$ is lognormal and is noted as $\mbox{Log}\mathcal{N}(\mu_X, \sigma^2_X)$. It is well known that the first and second moment are $E(Z)=e^{\mu_X + \frac{1}{2}\sigma^2_X}$ and $E(Z^2)=e^{2\mu_X +2\sigma^2_X}$ respectively.

A usefully lognormal property recalls to the power and multiplication, indeed if $Z=e^X \sim \mbox{Log}\mathcal{N}(\mu_X, \sigma^2_X)$ and $a$ is a constant then,

\begin{eqnarray}\label{power_constant_logN_properties}
aZ  &\sim & \mbox{Log}\mathcal{N}(\mu_X + \log(a), \sigma^2_X)\\
Z^a &\sim & \mbox{Log}\mathcal{N}(a\mu_X , a^2\sigma^2_X)
\end{eqnarray} 

The shifted lognormal distribution or three parameter lognormal is a lognormal distribution by addition of a shift parameter $\tau$. Thus, if $Z=e^X$ is a lognormal random variable then $Z_{\tau} = \tau - Z$ has the negative shifted lognormal distribution and $Z_{\tau} = \tau + Z$ has the shifted lognormal distribution, in general we note $Z_{\tau} \sim \mbox{Log}\mathcal{N}(\tau,\mu_X, \sigma^2_X)$.

Given a sample of a distribution $p_1, p_2,...,p_n$, we denote the central sample moments by;

\begin{eqnarray}
\bar{p} &=& \dfrac{1}{n}\sum^n_{i=1} p_i\\
m_2 &=& \dfrac{1}{n}\sum^n_{i=1} (p_i - \bar{p})^2\\
m_3 &=& \dfrac{1}{n}\sum^n_{i=1} (p_i - \bar{p})^3 
\end{eqnarray} 

The usual way to estimate the parameters $\tau,\mu_X, \sigma_X$ is by using the method of moment; thus given a sample $(p_i)$ equating the sample first moment (the sample mean) with its population value (the population mean), and equating the second and third sample central moments with their population values yields:

\begin{eqnarray}
\bar{p} & = & \tau + e^{\mu_X + \frac{1}{2}\sigma^2_X }\\ 
m_2 & =  & e^{2\mu_X + \sigma^2_X}(e^{\sigma^2_X} -1)\\
m_3 & = & e^{3\mu_X + \frac{3}{2}\sigma^2_X}(e^{\sigma^2_X} -1)^2(e^{\sigma^2_X}+2)
\end{eqnarray}

The skewness of a random variable is a measure of the asymmetry of the probability distribution of a real-valued random variable about its mean, thus if $V$ is a random variable, the skewness is defined as;
$$Skew(V) = \frac{E\bigg( \big(V-E(V)\big)^3\bigg)}{{s_V}^{3}} $$ 
  
where $s_V$ is the standard deviation of $V$. Given a sample of the random variable an estimator of the skewness is,

$$\widehat{Skew(V)} = \frac{m_3}{\widehat {s}^3} $$   

The key to the approximation of the price distribution is based on the Newton's binomial generalization to real exponents; by applying Newton's it is possible to approximate the price distribution $P(r)$ as a Basket with lognormal terms. Indeed, 

\begin{eqnarray}
P(r)=k e^{-Lr}\big(1+e^{C(r-x_0)}\big)^{-U/C} & \approx & k  e^{-Lr} \sum^n_{j=0} \binom{-U/C}{j} e^{j C (r - x_0)} \\
 & = & \sum^n_{j=0} \binom{-U/C}{j} k e^{(jC - L)r - jCx_0} 
\end{eqnarray}

where $\binom{-U/C}{j} = \frac{ -U/C (-U/C - 1) ... (-U/C - j + 1)}{j!} $ and $n$ huge.

Usually correlated lognormals sum has no closed-form expression, however it may be reasonably approximated by another shifted lognormal distribution by matching the moments. This basket approximation can be negatively skewed, and therefore direct lognormal distribution cannot be fitted.  A good way to avoid these problems is through the choice of negative shifted lognormal distribution. \cite{Borovkova(2007)} studied the choice of the approximating basket distribution by using a generalized family of lognormal distributions and this approximations copes negative skewness. In this work negative skewness basket  are approximated by negative shifted lognormal distribution. 
\bigskip
 
\subsubsection{Option valuation by using Shifted Lognormal approximation}  
Following a similar approach as in \cite{Borovkova(2007)}, if the skewness of terminal price distribution is negative, then negative shifted lognormal distribution are chosen as an approximating distribution; otherwise if the skewness is positive, then the approximating distribution are the shifted lognormal.  
Thus, by using the shifted lognormal approximation the payoff of a call option on the maturity date $T$ and strike $K$ is approximated by the payoff of a put with strike $\tau - K$ when the skewness of price is negative. Otherwise, when the skewness is positive the call with strike $K$ is approximated with a call with strike $K - \tau$:

\[
\left\{ \begin{array}{lcl}
(P(r_T)- K)^+ \simeq (\tau - Z - K)^+ = \big( (\tau - K) - Z\big)^+ & \mbox{ if } & \widehat{Skew(P(r_T))} <0 \\
& & \\
(P(r_T)- K)^+ \simeq (\tau + Z - K)^+ = \big(Z - (K - \tau)\big)^+ & \mbox{ if } & \widehat{Skew(P(r_T))} >0
\end{array}
\right.
\] 

\bigskip
 
where $Z$ is a lognormal random variable. These argument lead to value the option by using the Black-Scholes formula.  

The algorithm that we have developed for pricing mortgage options is summarized below;

\bigskip

\begin{algorithm}[H]\label{shifted_lognormal_pricing}
\small
\caption{Computes mortgage options by using Shifted Lognormal Price approximation }
\KwData {$r_f, T, K, L, C, U, \mu, \sigma, r_0, x_0, P_0$}
\KwResult {returns the option value approximation $\mathcal{C}_{SLN}$.}

P $\leftarrow$ Generate a sample of the terminal price distribution $P(r_T)$ \\

\If {skew(P) $<0$} {$(\tau,\mu_X, \sigma_X)$ $\leftarrow$ Fit a shifted lognormal distribution on P by matching moments.\\
                        $M_1$ $\leftarrow$ $e^{\mu_X + 0.5\sigma_X^2}$\\
                        $M_2$ $\leftarrow$ $e^{2\mu_X + 2\sigma_X^2}$\\
                        $W$ $\leftarrow$ $\sqrt{\bigg(\log\bigg( \dfrac{M_2}{M_1^2} \bigg) \bigg)}$\\
                        $d_1, d_2$ $\leftarrow$ $\dfrac{\log( M_1 ) - \log( - K - \tau ) \pm 0.5 W^2}{W}$\\
                        $\mathcal{C}_{SLN}$ $\leftarrow$ $e^{-r_fT}\big( (-K-\tau)N(-d_2) - M_1N(-d_1)\big) $ }

\If {skew(P) $>0$} {$(\tau,\mu_X, \sigma_X)$ $\leftarrow$ Fit a shifted lognormal distribution on P by matching moments.\\
                        $M_1$ $\leftarrow$ $e^{\mu_X + 0.5\sigma_X^2}$\\
                        $M_2$ $\leftarrow$ $e^{2\mu_X + 2\sigma_X^2}$\\
                        $W$ $\leftarrow$ $\sqrt{\bigg(\log\bigg( \dfrac{M_2}{M_1^2} \bigg) \bigg)}$\\
                        $d_1, d_2$ $\leftarrow$ $\dfrac{\log( M_1 ) - \log( K - \tau ) \pm 0.5 W^2}{W}$\\
                        $\mathcal{C}_{SLN}$ $\leftarrow$ $e^{-r_fT}\big(M_1 N(d_1) - ( K - \tau )N(d_2) )\big)$ }
\end{algorithm}

\bigskip

\subsection{Lognormal approximation}

In the case of a positively skewed price distribution, it may be reasonably approximated by another lognormal distribution by matching the first two central moment. Let $(e^{X_1},e^{X_2})$ be a correlated lognormal random vector with $X_i \sim \mathcal{N}(\mu_i, \sigma^2_i)$ then the equation to be approximated is 

\begin{eqnarray}\label{sumlognormalaprox}
Y:=e^{X_1}+e^{X_2} \approx Z:=e^X \sim \mbox{Log}\mathcal{N}(\mu_X, \sigma^2_X).
\end{eqnarray}

A commonly used way to approximate \ref{sumlognormalaprox} is to find $\mu_X$ and $\sigma^2_X$ by matching the first two central moment of $Y$ and $Z$. Indeed, recalling \cite{Safak(1994)} since $$E(e^{X_1}e^{X_2})= e^{\mu_1+\mu_2+\frac{1}{2}(\sigma^2_1 + \sigma^2_2+ 2 \mbox{Cov}(X_1,X_2))}$$      
then the second central moment of a sum of lognormal is $$E(Y^2)=E((e^{X_1}+e^{X_2})^2)=e^{2\mu_1+2\sigma^2_1} + 2 e^{\mu_1+\mu_2+\frac{1}{2}(\sigma^2_1 + \sigma^2_2+ 2 \mbox{Cov}(X_1,X_2))}+e^{2\mu_2+2\sigma^2_2}.$$
thus by matching the moments, 

$$
\left\{
\begin{array}{cl}
E(Y)= E(Z) \Rightarrow & e^{\mu_X + \frac{1}{2}\sigma^2_X}=e^{\mu_1 + \frac{1}{2}\sigma^2_1}+e^{\mu_2 + \frac{1}{2}\sigma^2_2}\\
   &\\
E(Y^2)= E(Z^2)\Rightarrow & e^{2\mu_X +2\sigma^2_X} = e^{2\mu_1+2\sigma^2_1} + 2 e^{\mu_1+\mu_2+\frac{1}{2}(\sigma^2_1 + \sigma^2_2+ 2 \mbox{Cov}(X_1,X_2))}+e^{2\mu_2+2\sigma^2_2}
\end{array}
\right.
$$ 

and a straightforward computation by simple substitution allows us to get,

\begin{eqnarray}\label{muXsigmaX}
\mu_X &=& \log \bigg(e^{\mu_1 + \frac{1}{2}\sigma^2_1}+e^{\mu_2 + \frac{1}{2}\sigma^2_2}\bigg) - \dfrac{\sigma^2_X}{2} \\
\sigma^2_X &=& \log \bigg( \dfrac{e^{2\mu_1+2\sigma^2_1} + 2 e^{\mu_1+\mu_2+\frac{1}{2}(\sigma^2_1 + \sigma^2_2+ 2 \mbox{Cov}(X_1,X_2))}+e^{2\mu_2+2\sigma^2_2}}{ \big(e^{\mu_1 + \frac{1}{2}\sigma^2_1}+e^{\mu_2 + \frac{1}{2}\sigma^2_2}\big)^2 }\bigg) 
\end{eqnarray}

\bigskip

\subsubsection{Option valuation by using Lognormal approximation}
Under the assumptions of positively skew of empirical price distribution, then it is possible to approximate it by a lognormal distribution with a close parametric formula in terms of model parameters. The Theorem below shows the lognormal approximation, the option price approximation and option delta approximation in terms of model parameters.         

\begin{theorem}\label{parametriclognormal} Suppose that the mortgage rate follows a normal process with parameters $\mu, \ \sigma$ and assume that the power of terminal price distribution $P(r_T)^{-C/U}$ has positively skewness then there exits parameters $\mu_{P_{r_T}}$ and $\sigma_{P_{r_T}}$ in terms of model parameters $\mu,\sigma,U,L,C,T,k$ such that,
\begin{enumerate}

\item $P(r_T)\approx \mbox{Log}\mathcal{N}(\mu_{P_{r_T}},\sigma^2_{P_{r_T}})$, that is the terminal price distribution could be approximated by a lognormal distribution with parametric parameters $\mu_{P_{r_T}}$ and $\sigma_{P_{r_T}}$.
\bigskip

\item $\mathcal{C}\sim \mathcal{C}_{LN} := e^{-r_f T} \big( e^{\mu_{P_{r_T}}+\frac{1}{2}\sigma^2_{P_{r_T}}}N(d_1) - KN(d_2)\big)$ where, 
$$d_1 = \dfrac{\log\bigg(\dfrac{e^{\mu_{P_{r_T}}+\frac{1}{2}\sigma^2_{P_{r_T}}}}{K}\bigg) + \dfrac{\sigma^2_{P_{r_T}}}{2}}{\sigma_{P_{r_T}}}  \  \ \mbox{and} \ \ d_2 = \dfrac{\log\bigg(\dfrac{e^{\mu_{P_{r_T}}+\frac{1}{2}\sigma^2_{P_{r_T}}}}{K}\bigg) - \dfrac{\sigma^2_{P_{r_T}}}{2}}{\sigma_{P_{r_T}}}$$ and $N$ denotes the cumulative distribution function of standard normal distribution.
\bigskip

\item Delta and Gamma $\Delta = \dfrac{\partial \mathcal{C}}{\partial P_0}, \Gamma = \dfrac{\partial^2 \mathcal{C}}{\partial^2 P_0} $ could be approximated by the closed parametric formulas given by 

$$
\Delta_{LN}  := e^{-r_f T} \dfrac{e^{\mu_{P_{r_T}}+\frac{1}{2}\sigma^2_{P_{r_T}}}N(d_1)}{P_0}, \  
\Gamma_{LN}  :=  e^{-r_f T} \dfrac{e^{\mu_{P_{r_T}}+\frac{1}{2}\sigma^2_{P_{r_T}}}\phi(d_1)}{P_0^2 \sigma_{P_{r_T}} }$$

where $\phi(x)=\frac{\partial N(x)}{\partial x} = \frac{1}{\sqrt{2 \pi}}e^{-\frac{x^2}{2}}$.

\end{enumerate}
\end{theorem}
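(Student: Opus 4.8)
The plan is to reduce everything to the two–lognormal moment matching already set up in \eqref{sumlognormalaprox}--\eqref{muXsigmaX}, by first raising the price function to the power $-C/U$. Starting from \eqref{priceEquation}, a direct computation gives
\begin{equation*}
P(r)^{-C/U}=k^{-C/U}e^{(LC/U)r}\bigl(1+e^{C(r-x_0)}\bigr)=k^{-C/U}e^{(LC/U)r}+k^{-C/U}e^{-Cx_0}e^{((L+U)C/U)r}.
\end{equation*}
Since $r_T\sim\mathcal N(r_0+\mu T,\sigma^2 T)$, each summand is lognormal: writing $m=r_0+\mu T$ and $v=\sigma^2 T$, I would set $e^{X_1}:=k^{-C/U}e^{(LC/U)r_T}$ and $e^{X_2}:=k^{-C/U}e^{-Cx_0}e^{((L+U)C/U)r_T}$ and read off, using \eqref{power_constant_logN_properties}, the parameters $\mu_1=-\tfrac{C}{U}\log k+\tfrac{LC}{U}m$, $\sigma_1^2=(\tfrac{LC}{U})^2 v$, $\mu_2=-\tfrac{C}{U}\log k-Cx_0+\tfrac{(L+U)C}{U}m$, $\sigma_2^2=(\tfrac{(L+U)C}{U})^2 v$, together with $\mathrm{Cov}(X_1,X_2)=\tfrac{LC}{U}\cdot\tfrac{(L+U)C}{U}\,v$. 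Thus $P(r_T)^{-C/U}=e^{X_1}+e^{X_2}$ is exactly a sum of two correlated lognormals, which under the positive skewness hypothesis I approximate by $e^{X}\sim\mathrm{Log}\mathcal N(\mu_X,\sigma_X^2)$ via \eqref{muXsigmaX}. Finally, inverting the power with the second identity in \eqref{power_constant_logN_properties} gives $P(r_T)=(e^{X})^{-U/C}\sim\mathrm{Log}\mathcal N\bigl(-\tfrac{U}{C}\mu_X,\tfrac{U^2}{C^2}\sigma_X^2\bigr)$, so that $\mu_{P_{r_T}}=-\tfrac{U}{C}\mu_X$ and $\sigma^2_{P_{r_T}}=\tfrac{U^2}{C^2}\sigma_X^2$, establishing (1) with closed parametric expressions in $\mu,\sigma,U,L,C,T,k$.

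Part (2) is then immediate: once the terminal price is lognormal, $\mathcal C=e^{-r_fT}E[(P(r_T)-K)^+]$ is just the Black--Scholes--Merton expectation of a call payoff against a $\mathrm{Log}\mathcal N(\mu_{P_{r_T}},\sigma^2_{P_{r_T}})$ variable. I would carry out the standard computation $E[(e^{W}-K)^+]=e^{\mu+\frac12\sigma^2}N(d_1)-KN(d_2)$ for $W\sim\mathcal N(\mu,\sigma^2)$, which yields exactly the stated $\mathcal C_{LN}$, $d_1$ and $d_2$ after substituting $\mu=\mu_{P_{r_T}}$ and $\sigma=\sigma_{P_{r_T}}$.

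For (3) the crucial structural observation is how $P_0$ enters the approximation. Since $k=P_0 e^{Lr_0}(1+e^{C(r_0-x_0)})^{U/C}$, we have $\log k=\log P_0+\mathrm{const}$, and the dependence on $k$ sits only in the common additive term $-\tfrac{C}{U}\log k$ appearing in both $\mu_1$ and $\mu_2$, while $\sigma_1^2,\sigma_2^2$ and $\mathrm{Cov}(X_1,X_2)$ are free of $k$. Hence a factor $k^{-C/U}$ pulls out of $e^{\mu_1+\frac12\sigma_1^2}+e^{\mu_2+\frac12\sigma_2^2}$, and a factor $k^{-2C/U}$ pulls out of both numerator and denominator of the ratio defining $\sigma_X^2$ in \eqref{muXsigmaX}; the latter cancels, so $\sigma_X^2$, and therefore $\sigma^2_{P_{r_T}}$, is independent of $P_0$, whereas $\mu_X=-\tfrac{C}{U}\log k+\mathrm{const}$ gives $\mu_{P_{r_T}}=\log P_0+\mathrm{const}$. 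Consequently $F:=e^{\mu_{P_{r_T}}+\frac12\sigma^2_{P_{r_T}}}$ is proportional to $P_0$, with $\partial F/\partial P_0=F/P_0$, while $\partial\sigma_{P_{r_T}}/\partial P_0=0$ and $\partial d_1/\partial P_0=\partial d_2/\partial P_0=1/(\sigma_{P_{r_T}}P_0)$. Differentiating $\mathcal C_{LN}$ and invoking the Black--Scholes identity $F\phi(d_1)=K\phi(d_2)$ makes the $\phi$-terms cancel, leaving $\Delta_{LN}=e^{-r_fT}\tfrac{F}{P_0}N(d_1)$; differentiating once more (now $F/P_0$ is constant in $P_0$) produces $\Gamma_{LN}=e^{-r_fT}\tfrac{F\,\phi(d_1)}{P_0^2\sigma_{P_{r_T}}}$, matching the claimed formulas.

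The routine parts are the algebra in the first step and the textbook Black--Scholes integral in (2). The part requiring real care is the $P_0$-bookkeeping in (3): one must verify precisely that $k$ enters $\mu_1,\mu_2$ only through the shared term $-\tfrac{C}{U}\log k$ and nowhere in the variances or covariance, since this is exactly what forces $\sigma^2_{P_{r_T}}$ to be $P_0$-independent and $\mu_{P_{r_T}}$ to be affine in $\log P_0$ --- the two facts that collapse the Greeks to the clean closed forms. I would treat the skewness hypothesis as the condition guaranteeing that the lognormal (rather than shifted-lognormal) match for the sum $e^{X_1}+e^{X_2}$ is legitimate, so that the power identity applies and $P(r_T)$ inherits an honest lognormal law.
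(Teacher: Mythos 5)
Your proposal is correct and follows essentially the same route as the paper: the same power trick $P(r_T)^{-C/U}=k^{-C/U}\bigl(e^{\frac{LC}{U}r_T}+e^{C(\frac{L}{U}+1)r_T-Cx_0}\bigr)$, the same two-lognormal moment match via \eqref{muXsigmaX}, the same Black--Scholes--Merton expectation for (2), and the same $P_0$-bookkeeping (with $F\phi(d_1)=K\phi(d_2)$ cancelling the $\phi$-terms) for (3). The only cosmetic difference is that you absorb the constant $k^{-C/U}$ into $\mu_1,\mu_2$ whereas the paper keeps $k$ outside and reattaches $\log k$ to $\mu_{P_{r_T}}$ at the end; the two bookkeepings yield identical parameters.
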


\begin{proof} 
\begin{enumerate}
\item From the given Price equation solution as function of rate, an algebraic straight forward computation shows that the terminal price distribution may be expressed as,

\begin{eqnarray}\label{Pr_alt_expresion}
P(r_T)^{-\frac{C}{U}} = k^{-\frac{C}{U}} \left[\bigg({\dfrac{P(r_T)}{k}}\bigg)^{-\frac{C}{U}}\right] = k^{-\frac{C}{U}} \big( e^{\frac{LC}{U}r_T} + e^{C(\frac{L}{U}+1)r_T - Cx_0}\big) 
\end{eqnarray}  
  
Calling $X_1= \frac{LC}{U}r_T$ and $X_2=C\big(\frac{L}{U}+1\big)r_T - Cx_0$ and since $r_T$ is normally distributed with mean and variance given by $r_0+\mu T$ and $(\sigma \sqrt{T})^2$, then it follows that $X_1\sim \mathcal{N}(\mu_1,\sigma^2_1)$ and $X_2\sim \mathcal{N}(\mu_2,\sigma^2_2)$ are normally distributed where,

\begin{align}
\mu_1 &= \frac{LC}{U}\big(r_0 + \mu T\big) & \sigma^2_1 &= \bigg(\frac{LC}{U}\sigma \sqrt{T}\bigg)^2 \\
\mu_2 &= C\bigg(\frac{L}{U}+1\bigg)\big(r_0+\mu T\big)-Cx_0 & \sigma^2_2 &= \bigg(C\bigg(\frac{L}{U}+1\bigg)\sigma \sqrt{T}\bigg)^2
\end{align}
The covariance between the random variables $X_1$ and $X_2$ may also be expressed in terms of $L,U,C, \sigma$ and $T$, indeed,

\begin{equation}
\mbox{Cov}(X_1,X_2)=\dfrac{LC}{U}C\bigg(\dfrac{L}{U}+1\bigg)\mbox{Var}(r_T)=\dfrac{LC}{U}C\bigg(\dfrac{L}{U}+1\bigg)\sigma^2 T
\end{equation} 

From Equation \ref{Pr_alt_expresion} it follows that $$Skew\big(P(r_T)^{-\frac{C}{U}}\big)=sign(k^{-\frac{C}{U}})Skew\big(e^{\frac{LC}{U}r_T} + e^{C(\frac{L}{U}+1)r_T - Cx_0}\big)$$ thus, the terminal price distribution may be expressed as a power of a positively skewness sum of lognormal variables times a constant. Then by using the above constructed parameters $\mu_1, \mu_2, \sigma^2_1,\sigma^2_2$ and the lognormal sum matching parameters $\mu_X$ and $\sigma^2_X$ given by Equation \ref{muXsigmaX} the terminal price distribution may be approximated by,  

\begin{eqnarray}
P(r_T)\approx k \big(e^X\big)^{-\frac{U}{C}} \ \ \mbox{where} \  X \sim \mathcal{N}(\mu_X,\sigma^2_X)
\end{eqnarray}

Then by using the power and multiplication properties given by Equation \ref{power_constant_logN_properties} the final price distribution is approximated by a lognormal distribution with parameters,

\begin{eqnarray}
P(r_T)\approx \mbox{Log}\mathcal{N}(\mu_{P_{r_T}},\sigma^2_{P_{r_T}})  \ \mbox{where} \
\mu_{P_{r_T}} = -\frac{U}{C}\mu_X + \log(k) \ ,  \ \sigma_{P_{r_T}} = \frac{U}{C}\sigma_X  
\end{eqnarray}  

\item Since $P(r_T)$ is approximated by a lognormal distribution with parameters $\mu_{P_{r_T}}$ and $\sigma^2_{P_{r_T}}$, the call option price can be valued by the general Black-Scholes-Merton expectation formula (see \cite{Hull(2006)} for a proof); thus by using $E(P(r_T)) = e^{\mu_{P_{r_T}}+\frac{1}{2}\sigma^2_{P_{r_T}}}$ we have,

\begin{eqnarray}\label{call_close_formula_aprox}
\mathcal{C} &\sim & e^{-r_f T} \bigg( E(P(r_T))N(d_1) - K N(d_2)\bigg) =  e^{-r_f T}\big(e^{\mu_{P_{r_T}}+\frac{1}{2}\sigma^2_{P_{r_T}}}N(d_1) - KN(d_2)\big)
\end{eqnarray}

\begin{eqnarray}
d_1 &=&  \dfrac{\log\bigg(\dfrac{E(P(r_T))}{K}\bigg) + \dfrac{\sigma^2_{P_{r_T}}}{2}}{\sigma_{P_{r_T}}} =
 \dfrac{\log\bigg(\dfrac{e^{\mu_{P_{r_T}}+\frac{1}{2}\sigma^2_{P_{r_T}}}}{K}\bigg) + \dfrac{\sigma^2_{P_{r_T}}}{2}}{\sigma_{P_{r_T}}}\nonumber \\ 
d_2 &=&  \dfrac{\log\bigg(\dfrac{E(P(r_T))}{K}\bigg) - \dfrac{\sigma^2_{P_{r_T}}}{2}}{\sigma_{P_{r_T}}} =
 \dfrac{\log\bigg(\dfrac{e^{\mu_{P_{r_T}}+\frac{1}{2}\sigma^2_{P_{r_T}}}}{K}\bigg) - \dfrac{\sigma^2_{P_{r_T}}}{2}}{\sigma_{P_{r_T}}}\nonumber 
\end{eqnarray}

\item The Greeks are followed from a direct derivation of the closed formula pricing approximation. Indeed, recalling the lognormal approximation parameters for $P(r_T)$, 

\begin{eqnarray}
\mu_{P_{r_T}} &=& -\frac{U}{C}\mu_X + \log \bigg(P_0 e^{Lr_0}\big(1+e^{C(r_0-x_0)}\big)^{\frac{U}{C}}\bigg)\nonumber \\
\sigma_{P_{r_T}} &=& \frac{U}{C}\sigma_X \nonumber
\end{eqnarray}

and since $\mu_X$ and $\sigma_X$ are not price depended we have,
\begin{eqnarray}
\dfrac{\partial \big(e^{\mu_{P_{r_T}}+\frac{1}{2}\sigma^2_{P_{r_T}}}\big)} {\partial P_0} &=& \dfrac{ e^{\mu_{P_{r_T}}+\frac{1}{2}\sigma^2_{P_{r_T}}}}{P_0} \nonumber \\
\dfrac{\partial d_1} {\partial P_0} = \dfrac{\partial d_2} {\partial P_0} &=& \dfrac{1}{\sigma_{P_{r_T}}P_0}  \nonumber       
\end{eqnarray}
Using the Formula \ref{call_close_formula_aprox} and applying the above relationship jointly with product and chain rule, we have,

\begin{eqnarray}
\dfrac{\partial \mathcal{C}}{\partial P_0} \sim e^{-r_f T} \bigg(e^{\mu_{P_{r_T}}+\frac{1}{2}\sigma^2_{P_{r_T}}}\frac{N(d_1)}{P_0}+
\dfrac{1}{\sigma_{P_{r_T}}P_0}\big(e^{\mu_{P_{r_T}}+\frac{1}{2}\sigma^2_{P_{r_T}}} \phi(d_1) - K \phi(d_2)\big)\bigg)
\end{eqnarray}
then the result can achieve from the following equality,
$$\log\bigg(\dfrac{\phi(d_1)}{\phi(d_2)}\bigg) = \frac{1}{2}\bigg(d^2_2 - d^2_1\bigg) = - \log \bigg( \dfrac{e^{\mu_{P_{r_T}}+\frac{1}{2}\sigma^2_{P_{r_T}}}}{K} \bigg). $$  
The Gamma approximation is obtained from Delta approximation by directly deriving,
$$  \dfrac{\partial}{\partial P_0}\bigg(\dfrac{\partial \mathcal{C}}{\partial P_0}\bigg) \approx  \dfrac{\partial}{\partial P_0} \bigg(e^{-r_f T} \dfrac{e^{\mu_{P_{r_T}}+\frac{1}{2}\sigma^2_{P_{r_T}}}N(d_1)}{P_0} \bigg).$$   
\end{enumerate}

\end{proof}

\section{Test Results}
In this section we perform test by looking at the effect of varying model and contract parameters such as, strikes, volatilities and curvatures. The test are performed on both of our approach by approximating by shifted lognormal and lognormal distribution. A modified Shapiro-Wilk Goodness-of-Fit Test is used to test the shifted lognormal assumptions by estimating the threshold parameter via the zero-skewness method. We also have compared the empirical price distribution against the approximation distribution by using graphics test such Boxplots and QQPlots. Our standard test example is a mortgage Call option with parameters given by:

\begin{table}[H]
\centering
{\fontfamily{ptm}\selectfont{
\rowcolors{2}{lightgray!20}{} 
\begin{tabular}{*2c}
\toprule
Parameters & Values         \\                
\toprule
         &                     \\
  T (\mbox{Option Expiry}) &    90 \ \mbox{days} (90/360)                      \\
 $r_f$ &   0.0209                              \\ 
  $K$   &    100                \\      
 $P_0$  & 100     \\
 $r_0$ & 0.01    \\
 $\mu$ & 0 \\
 $L$ & 1 \\
 $U$ & 9 \\
 $\sigma$ & 0.02 \\
 $x_0$ & 0.055 \\
\bottomrule                               
\end{tabular}
}}
\caption{Default parameters values. }\label{parametersDefault}
\end{table}

\subsection{Empirical and Shifted lognormal distributions}   
Here we test the assumptions of empirical price distribution on the shifted lognormal. A sample of the empirical price distribution is generated by performing the followings step:
\begin{itemize}
\item Generate a sample of the terminal mortgage rates $r_1, r_2, ...r_n$ of length $n$ by sampling a normal with mean $r_0 + \mu T $ and standard deviation given by $\sigma\sqrt{T}$.
\item Compute $p_1:=P(r_1), p_2:=P(r_2),...,p_n:=P(r_n)$ by using Equation \ref{priceEquation}. 
\end{itemize}    

We test the null hypothesis that the empirical price sample $p_1, p_2,...,p_n$ may come from shifted lognormal with parameters $(\tau,\mu_X, \sigma^2_X)$ against the alternative hypothesis that the sample come from other distribution. The Table \ref{SWShiftedLog} shows the results of the modified Shapiro-Wilk Goodness-of-Fit test.

\begin{table}[H]
\centering
\begin{tabular}{|c|c|c|c|c|c|c|}
  \hline
  C & skew       & W statistic & p-value & $\mu_X$ & $\sigma_X$ & $\tau$ \\ 
  \hline
 0.5 & 0.1236835 & 0.9993015 & 0.5455999 & 4.8383091 & 0.0431144 & -26.3897938 \\ 
 1   & 0.1162353 & 0.9993010 & 0.5451805 & 4.8911724 & 0.0405153 & -33.2424301 \\ 
 2   & 0.1010308 & 0.9992999 & 0.5441955 & 5.0126047 & 0.0352110 & -50.4261475 \\ 
 3   & 0.0854314 & 0.9992987 & 0.5430310 & 5.1612134 & 0.0297708 & -74.5039283 \\ 
 4   &  0.0694575 & 0.9992973 & 0.5417093 & 5.3487801 & 0.0242015 & -110.4771148 \\ 
 5   & 0.0531310 & 0.9992958 & 0.5402571 & 5.5969911 & 0.0185106 & -169.7371020 \\ 
 6   & 0.0364751 & 0.9992942 & 0.5387055 & 5.9530868 & 0.0127061 & -285.0605100 \\ 
 10 & -0.0329438 & 0.9992875 & 0.5297216 & 5.9722431 & 0.0114670 & -492.2735105 \\ 
 15 & -0.1239802 & 0.9992812 & 0.5177176 & 4.5395958 & 0.0430598 & -193.5528287 \\ 
 20 & -0.2164493 & 0.9992826 & 0.5163660 & 3.8734434 & 0.0748392 & -148.0180305 \\ 
 30 & -0.3918973 & 0.9993261 & 0.5729071 & 3.0693251 & 0.1329803 & -121.4553139 \\ 
 40 & -0.5349625 & 0.9994000 & 0.6806394 & 2.5738760 & 0.1755753 & -113.0580899 \\ 
   \hline
\end{tabular}
\caption{Modified Shapiro-Wilk Goodness-of-Fit test, Skew and Fitted parameters under different curvature levels.}\label{SWShiftedLog} 
\end{table}

Under different curvature levels, the Shapiro-Wilk does not reject the null hypothesis that the empirical price is a shifted lognormal distribution. In addition, in order to compare the empirical distribution against the shifted lognormal we perform graphical test such as qqplot between them. Thus, a sample of the approximated shifted lognormal distribution is generated by performing:

\begin{itemize}
\item Fit a shifted lognormal distribution $Z_{\tau} \sim \mbox{Log}\mathcal{N}(\tau,\mu_X, \sigma^2_X)$ on $p_1, p_2,...,p_n$ by matching moments.
\item Generate a sample of $Z_{\tau}$ of length $n$. 
\end{itemize}
     
Figure \ref{distribution} illustrates the performance of shifted lognormal approximation against empirical price distribution under different curvature levels. 

\begin{figure}[H]  
\begin{center}
\includegraphics[scale=0.95]{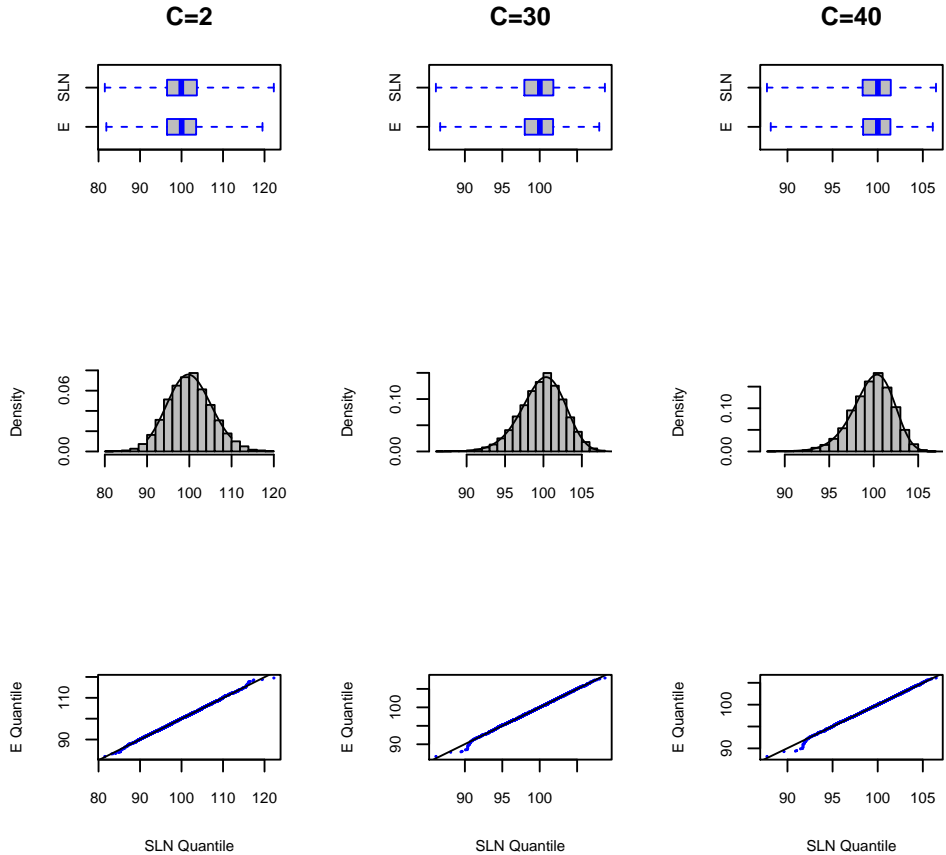}
\end{center}
\caption{Comparison between Empirical price distribution (E) and Fitted Shifted Lognormal (SLN) distribution for different curvatures levels.}\label{distribution}
\end{figure} 

The qqplot indicates a very good alignment between the quantiles and the empirical histogram shows with similar shapes against the parametric shifted lognormal density. It seems the effect of curvature against skew; small curvature correspond to a gradual change of the duration and therefore implicating positive skew empirical distribution; otherwise higher curvature levels correspond to negative skew.    

\subsection{Option price Sensitivity on Shifted Lognormal approximation}
In this section we perform the pricing methodology given by Algorithm \ref{shifted_lognormal_pricing}. In order to assess the performance of Shifted Lognormal approximation, here we perform sensitivity test by varying strike and volatility against different curvature levels. As reference values we compute the option prices by a Monte Carlo simulations and the number of simulations was chosen large enough, $n=70000$. The strike $K$ is varied from 97 to 103, curvature is varied from 0.5 to 40, rate volatility is varied from 0.005 to 0.04 and all other parameters set to default values. Figure \ref{PriceSensitivity1} shows the effect of changing strikes and volatility against curvature and the heat map shows the relative difference respect to Monte Carlo pricing thus, if $\mathcal{C}_{MC}$ denotes the Monte Carlo call price then the heat map illustrates $\frac{(\mathcal{C}_{SLN} - \mathcal{C}_{MC})100}{\mathcal{C}_{MC}}$.         

\begin{figure}[H]  
\begin{center}
\includegraphics[scale=1.1]{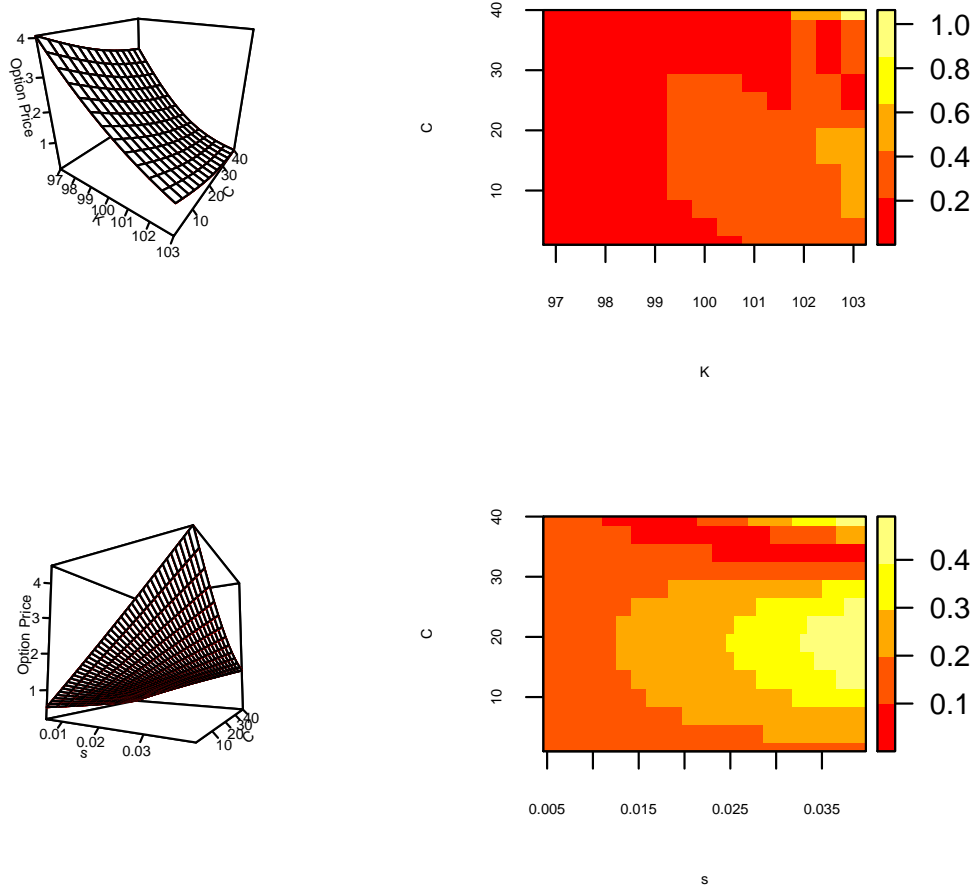}
\end{center}
\caption{Effect of strike and curvature on option price and effect of mortgage rate volatility and curvature on option price for shifted lognormal approximation and Monte Carlo. The heat map measures the percentage of relative differences between both approach.}\label{PriceSensitivity1}
\end{figure} 

The differences between the prices calculated by Monte Carlo and the close-form approximation are relatively small. In particular, when strike and volatility are growing the differences are generally increasing.

\subsection{Option Price Sensitivity on Lognormal approximation}
In this section we test the performance of parametric lognormal approximation given by Theorem \ref{parametriclognormal} under positively price distribution skewness. As the advantage of this analytically and parametric method compared to Monte Carlo is of course speed of computations; also this pricing methodology has very advantages into calibrations.

In order to ensure the assumptions, we start testing the lognormal hypothesis of empirical price distribution by performing Shapiro-Wilk Goodness-of-Fit lognormal test under low curvature levels. Table \ref{SWnormalitytest} contains the results and Figure \ref{graphical_test_lognormal} illustrates graphical test.

\begin{table}[H]
\centering
\begin{tabular}{|c|c|c|c|}
  \hline
  C & skew & W statistic & p-value  \\ 
  \hline
  0.5 & 0.123683 & 0.999249 & 0.616570   \\ 
  1 &   0.116235 &  0.999226 & 0.586158   \\ 
  2 &   0.101030 &  0.999164 & 0.508980   \\ 
  3 & 0.085431 &  0.999081 & 0.414202   \\ 
  4 & 0.069457 &  0.998977 & 0.311534  \\ 
  5 & 0.053131 &  0.998848 & 0.213987   \\ 
   6 & 0.036475 &  0.998694 & 0.133146   \\ 
   \hline
\end{tabular}
\caption{Summary statistics of Shapiro-Wilk test for lognormal assumptions of empirical price distribution.} 
\label{SWnormalitytest}
\end{table}

\begin{figure}[H]  
\begin{center}
\includegraphics[scale=0.8]{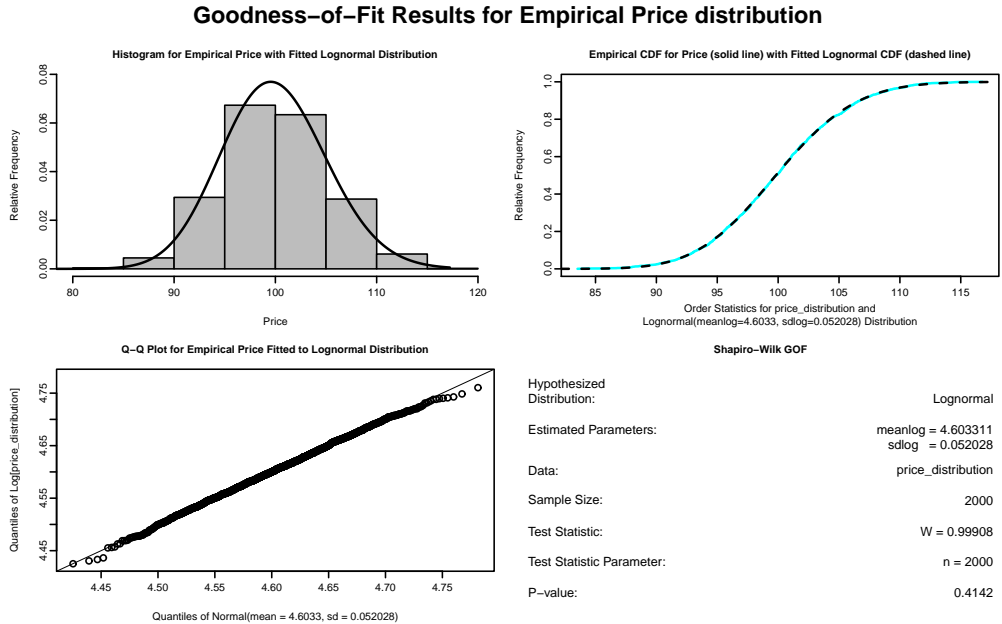}
\end{center}
\caption{Graphical test for lognormal empirical price approximation for $C=3$.}\label{graphical_test_lognormal}
\end{figure}

The test results showed that empirical price distribution may suppose lognormal for low curvature levels, and as expected when curvature increases p-values decreases evidencing the departure of lognormal assumptions.  

\bigskip

We will perform sensitivity test on option price and also for delta price. As before the reference values are computed by using Monte Carlo method. For reference delta, finite difference is used with 1bps (0.0001) on $P_0$ shock; more precisely the reference delta is computed by $\Delta_{MC}=\frac{\mathcal{C}_{MC}(P_0 + 0.0001) - \mathcal{C}_{MC}(P_0)}{0.0001}$. The strike $K$ is varied from 97 to 103, the current price $P_0$ is varied from 95 to 106, curvature is varied from 0.5 to 6 and all other parameters set to default values. Figure \ref{PriceSensitivity2} shows the results.

\begin{figure}[H]  
\begin{center}
\includegraphics[scale=1.1]{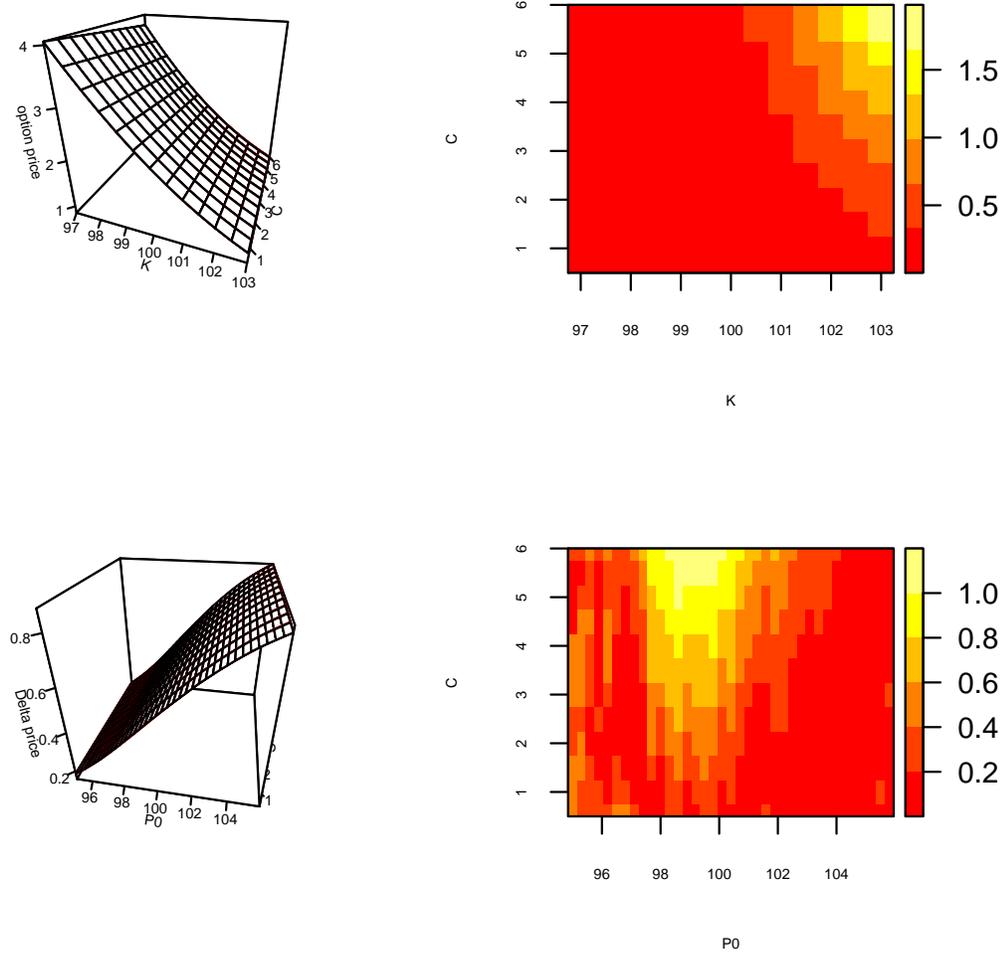}
\end{center}
\caption{Effect of strike and curvature on option price and effect of Price and curvature on Delta for lognormal approximation and Monte Carlo. The heat map measures the percentage of relative differences by performing $\frac{(\mathcal{C}_{LN} - \mathcal{C}_{MC})100}{\mathcal{C}_{MC}}$ for option price and $\frac{(\Delta_{LN} - \Delta_{MC})100}{\Delta_{MC}}$ for delta price.}\label{PriceSensitivity2}
\end{figure} 

The differences between the option prices and deltas calculated by Monte Carlo and the approach of lognormal approximation are relatively small. In particular, for option price it seems that the relative errors tend to increase when the option goes into in the money with higher curvature levels.

\section{Conclusion}

We have introduced a new approach for pricing mortgage option by approximating the mortgage price distribution by a family of lognormal distributions: regular or shifted lognormal. Hypothesis tests and graphical tests were performed and the results showed that the price distribution is close to shifted lognormal distributions and in particular for low curvature levels the price distribution is approximate by lognormal. These approximations allow us to make use of the Black-Scholes formulas for the European option price and option’s greeks. For low curvature levels, the lognormal distribution fits well to the price distribution, allowing the option price and the option's greeks to be approximated by means of Black-Scholes formulas in terms of the model parameters. Numerical simulations have shown that the option prices obtained by our analytic method approximate the prices resulting from Monte Carlo simulations remarkably well, and the delta performance of our method is also very good. The main advantage of our methodology over Monte Carlo approach is based on the computation speed and its possible applications to perform calibrations. 
   

%
%
%

\end{document}